\RequirePackage{amsmath}
\documentclass[runningheads,a4paper]{llncs}
\pdfoutput=1

\usepackage{amssymb}
\setcounter{tocdepth}{3}
\usepackage{graphicx}
\usepackage{url}

\usepackage{soul}
\usepackage{amsmath}
\usepackage{complexity}
\usepackage[colorinlistoftodos]{todonotes}

\usepackage{algorithm}
\usepackage{algpseudocode}
\usepackage{tabularx}
\usepackage{todonotes}

\algnewcommand\algorithmicforeach{\textbf{for each}}
\algdef{S}[FOR]{ForEach}[1]{\algorithmicforeach\ #1\ \algorithmicdo}

\newcommand{\qedproof}{\hfill $\square$ \medskip}
\renewenvironment{proof}{\par \noindent {\bf Proof}.}{\qedproof}

\newcommand{\PROBLEM}{CSCN problem}
\newcommand{\STEINER}{ST problem}

\begin{document}

\mainmatter  

\title{Extracting the Groupwise Core Structural Connectivity Network: Bridging Statistical and Graph-Theoretical Approaches}

\titlerunning{Extracting the Groupwise Core Structural Connectivity Network}
\author{\vspace{-2cm}}
\institute{\vspace{-1px}}

\author{Nahuel~Lascano\inst{1,2} 
    \and Guillermo~Gallardo\inst{1}
    \and Rachid~Deriche\inst{1}
    \and Dorian~Mazauric\inst{3}
    \and Demian~Wassermann\inst{1}}
\institute{Athena EPI, Universit\'e C\^ote d'Azur, Inria, France
\and
Computer Science Department, FCEyN, Universidad de Buenos Aires, Argentina
\and
ABS EPI, Universit\'e C\^ote d'Azur, Inria, France}
\authorrunning{N. Lascano et al.} 

\maketitle

\begin{abstract}
Finding the common structural brain connectivity network for a given population is an open problem, crucial for current neuroscience. Recent evidence suggests there's a tightly connected network shared between humans. Obtaining this network will, among many advantages, allow us to focus cognitive and clinical analyses on common connections, thus increasing their statistical power. In turn, knowledge about the common network will facilitate novel analyses to understand the structure-function relationship in the brain.

In this work, we present a new algorithm for computing the core structural connectivity network of a subject sample combining graph theory and statistics. Our algorithm works in accordance with novel evidence on brain topology. We analyze the problem theoretically and prove its complexity. Using 309 subjects, we show its advantages when used as a feature selection for connectivity analysis on populations, outperforming the current approaches.

\keywords{Group-wise connectome, core graph problem, brain connectivity, diffusion MRI}
\end{abstract}

\section{Introduction}

Isolating the common brain connectivity network from a population is a main problem in current neuroscience~\cite{Bullmore2009,Gong2009,Wassermann2016}. Recent evidence suggests that there's a common and densely connected brain connectome across humans~\cite{Bassett2013}. In this work we present a new approach for selecting these common connections, combining recent topological hypotheses~\cite{Bassett2013}  and  current methods~\cite{Gong2009,Wassermann2016}.

Finding the common brain connectome across subjects has the potential to increase our understanding of the relationship between function and structure in the brain. This relationship is one of the main open questions in neuroscience~\cite{Bullmore2009,Donahue2016}. Moreover, knowledge about the most common connections in a population will facilitate clinical and cognitive Diffusion MRI analyses by reducing the number of surveyed connections, increasing the statistical power of those analyses. Finding the common connectome will also allow us to increase our knowledge about the brain structure by comparing core networks across different populations.

We formalize the problem of selecting the common connections combining graph theory and statistics. Then, we prove that the problem is \NP-Hard and propose a polynomial-time algorithm to find approximate solutions. To do this, we develop an exact polynomial-time algorithm for a relaxed version of the problem and prove the algorithm's correctness and complexity.

Currently, the most used algorithm to extract a population's core structural connectivity network (CSNC)~\cite{Gong2009} uses an statistical approach: first, compute a connectivity matrix for each subject; then, analize each connection separately with a hypothesis test, using as null hypothesis that that edge is not present in the population; finally, construct a binary graph with the edges for which the null hypothesis was rejected. The main problem of Gong et al.'s~\cite{Gong2009} algorithm is that the resulting graph can be a set of disconnected subgraphs. Moreover, recent studies have shown that the brain has a \emph{core} network tightly connected and a sparsely connected \emph{outer} one~\cite{Bassett2013}. In other words, this approach ignores the resulting network's topology. Performing statistical analyses in a feature set chosen by hypothesis testing incurs in the double dipping problem~\cite{Kriegeskorte2009}.

A newer approach to solve the CSNC problem, designed by Wassermann et al.~\cite{Wassermann2016}, uses graph theory to get a connected CSCN: first, compute a binary connectivity graph for each subject using a threshold;  for each possible connection compute the ``cost'' of including or excluding it from the common graph by evaluating in how many subjects that connection is present; finally, construct the binary graph with all the edges that is ``cheaper'' to include than to exclude and connect the resulting graph if it's disconnected, using the minimum possible cost. This algorithm guarantees that the resulting graph is connected, but the connection binarization discards significant information for the resulting common network. In other words, it discards information of the probability of each connection being in the brain. This is problematic because the resulting graph may include edges for which tractography assigned a very low existence probability across subjects. Also, the outer part of the brain, the connections which do not result in the core network, should also be sparsely connected~\cite{Bassett2013}, which this algorithm does not enforce.

In this work we propose, for the first time, a polynomial-time algorithm to obtain the CSCN of a population  addressing the issues listed above. Our algorithm combines the recent graph-theoretical approach~\cite{Wassermann2016} with the statistical awareness of the most popular one~\cite{Gong2009}. We start by formalizing the problem, which allow us to prove that it's \NP-Hard. Then, we propose a first algorithm that solves a relaxed version of the problem in an exact way, giving the best possible core graph for our formalization. Then, we adapt it to guarantee a connected result, agreeing with recent evidence on structural connectivity network topology \cite[e.g.]{Bassett2013}. Finally, we validate our approach using 300 subjects from the HCP database and comparing the performance of the networks obtained by our new approach, Wassermann et al.'s~\cite{Wassermann2016} and Gong et al.'s~\cite{Gong2009} predicting connectivity values from handedness in the core network.

\section{Definitions, Problems and Contributions}\label{problem_section}
We want to develop a new algorithm to extract the core structural connectivity network, a problem that implies working with different brains. Thus, the first thing we need to do is unify them into a common connectivity model. This allows us to model all brains with graphs in which each node represents a cortical or sub-cortical region, and each edge represents a white matter connection between two regions. We choose the Desikan parcellation~\cite{Desikan2006} to uniformize the brain cortical and sub-cortical regions across subjects.

To compute the connectivity matrices we use a probabilistic tractography algorithm, which outputs one matrix per subject. The resulting matrices represent the existence probability of a connection across parcels in each subject~\cite{Donahue2016}. As these are symmetric, we interpret the matrices as weighted undirected graphs, sharing the node set across subjects.

Formally, we represent a sample of $N$ brain structural networks by $N$ complete weighted graphs $G_{1} = (V, E, w_1), \ldots, G_N = (V, E, w_N)$ with a common node set $V$. We call $G_1,\ldots,G_N$ the \textit{sample graphs}. Each graph $G_i$ corresponds to a subject. Each vertex $v \in V$ represents a cortical or sub-cortical region. Each edge $e\in E=V\times V$ represents a white matter bundle connecting two regions. Finally, the weight $w_i(e)$ is the connection probability for the edge $e$ in the subject $i$ obtained through tractography:
\begin{equation}\label{all_weights}
w_1(e), w_2(e), \ldots, w_N(e) \in [0,1] ~ \forall e \in E.
\end{equation}
Note that all graphs have the same ordered node set and all of them are complete: an edge weight, or connection probability, $w_i(e)$ of 0 represents an absent connection. Using this formalization we express the general core structural connectivity network problem as follows: find a core graph $G^* = (V^*, E^*)$ densely connected such that $G^*$ keeps the more \emph{relevant} connections $E^*\subseteq E$ in the sample and discards the less \emph{relevant} ones, for some definition of relevance and density. 
For simplicity, once we select $E^*$ we can define $V^*$ as
\begin{equation}
V^* = \{v \in V : \exists u \in V, (u, v) \in E^*\} ~,
\end{equation}
the set of nodes that the edges in $E^*$ cover. Then, we can reduce the problem of finding $G^*$ to find $E^*$ alone.

We want a formalization of relevance that represents the probability that a connection is present across subjects. Thus, we choose to model the group-wise relevance, $w^*(e)$ as the mean existence probability across subjects, factored by the standard deviation of these probabilities. In other words, $w^*(e)$ is the number of standard deviations that the mean existing probability of connection $e$ is larger than $0$. We use $w^*(e)$ as a statistical measure of edge presence across the population. Formally, \begin{equation}\label{weight}
w^*(e) \triangleq \frac{\overline{w(e)}}{s(e)}\text{ where } \overline{w(e)}\triangleq\sum_{i=1}^N\frac{w_i(e)} N,\, {s(e)}\triangleq  \sqrt{\sum_{i=1}^N \frac{\left(\overline{w(e)}-w_i(e)\right)^2}N}.
\end{equation}

Note that $w^*(e)$ is the statistic of a hypothesis z-test which assumes a media of 0 for the population weight of $e$. We choose the z-statistic because of the normal distribution's properties, e.g. linearity, even if other distributions, such as Beta distribution, may be more appropriate for modeling the probability. In any case, note that for the purpose of our contribution $w^*$ can be any function $E \rightarrow \mathbb{R}$ which grows with the relevance of the edges in the sample.

We also want a formalization that represents the density of the core subgraph. We use the relationship between the number of edges and the total statistical relevance $w^*$ that those edges sum:
\begin{equation}\label{alpha}
\alpha(w^*, E^*) \triangleq \frac{\sum_{e \in E^*} w^*(e)}{|E^*|} ~.
\end{equation}
As we want also a sparse outer subgraph, we also define its density:
\begin{equation}\label{beta}
\beta(w^*, E^*) \triangleq \frac{\sum_{e \in E \setminus E^*} w^*(e)}{|E^*|} ~.
\end{equation}

Now we can express our objective informally as: choose $E^*$ such that $\alpha(w^*, E^*)$ (Eq. \ref{alpha}) is large and $\beta(w^*, E^*)$ (Eq. \ref{beta}) small. In accordance to recent evidence on the core network, we also want $G^*$ to be connected. Here, connected means that for every pair of vertices $u, v$ in $V^*$ there is a path of edges in $E^*$ from $u$ to $v$.

Let $\mathcal{E}^{c}$ be the family of sets of edges that induce a connected graph. We now formalize the problem of finding this common graph $G^*$ in two different ways.
\begin{itemize}
\item[$\bullet$] The optimization version consists in computing:
\begin{equation}\label{optimization}
\max_{E^{*} \in \mathcal{E}^{c}} f(w^*, E^*) = \lambda \alpha(w^*, E^*) - (1-\lambda) \beta(w^*, E^*)
\end{equation}
The parameter $\lambda$ (between 0 and 1) can be adjusted to weight the density of the inner and the outer network. Note that if $\lambda = 1$, the solution to (\ref{optimization}) only considers the density of the core network, and if $\lambda = 0$, it only considers the edges excluded of the core network.

\item[$\bullet$] Given $A$ and $B$, the decision version consists in finding $E^* \subseteq \mathcal{E}^{c}$ such that:
\begin{align}
\begin{split}\label{decision}
\alpha(w^*, E^*) &\geq A\\
\beta(w^*, E^*) &\leq B
\end{split}
\end{align}

\end{itemize}

Having formalized the Core Structural Connectivity Network into an optimization and a decision problem, we proceed with one of our main theoretical contributions: proving that the problem is \NP-Complete.

\subsection{CSCN Problem's NP-Completeness}
We have formalized the problem of the Core Structural Connectivity Network taking into account the density and connectedness of the core subgraph and the sparsity of the outer one. We will now prove that, with this formalization, the problem is \NP-Complete.

\begin{definition}[Core Structural Connectivity Network problem]
Given $G_1 = (V, E, w_1), G_2 = (V, E, w_2), \ldots, G_N = (V, E, w_N)$ weighted graphs (the \emph{sample graphs}) with a common node set, a complete edges set ($E = V \times V$) and $w_1(e), w_2(e), \ldots, w_N(e) \in \mathbb{R}_{\geq0} ~ \forall e \in E$ weights of their edges, and given $A, B$ real numbers, find $G^* = (V^*, E^*)$ connected graph (the \emph{core graph}) such that
$$\alpha(w^*, E^*) \geq A$$
$$\beta(w^*, E^*) \leq B$$
for $\alpha$ and $\beta$ as defined in Eq.~(\ref{alpha}) and Eq~.(\ref{beta}).
\end{definition}

Here we prove that the \emph{Core Structural Connectivity Network problem}, called \PROBLEM, is NP-complete. In our reduction, we use the \emph{Steiner Tree problem}~\cite{Garey1979}, called \STEINER~in the
following. Given an edge-weighted graph $G' = (V',E',w)$, a subset
$S \subseteq V'$ of nodes, and a real $k \geq 0$, \STEINER~consists in
determining if there exists a connected subgraph $H$ such that $S \subseteq V(H)$ and $\sum_{e \in E(H)}{w(e)} \leq k$.
The decision version of \STEINER~is NP-complete even if all weights are equal~\cite{Garey1979}.

\smallskip

\textbf{Instance of \STEINER.}
Consider any edge-weighted graph $G' = (V',E',w)$ such that $w(e) =
\frac{1}{2}$ for every $e \in E'$.
Given $k \geq 0$, \STEINER~consists in determining if there exists a
connected subgraph $H$ such that $S \subseteq V(H)$ and $|E(S)| \leq
2k$.
Without loss of generality, we assume that $|E'| \geq 2k$ and that
$G'$ is connected.

\smallskip

\textbf{Reduction.}
We construct the instance of \PROBLEM~as follows.
Let $s = |S|$ and let $t \geq 1$ be any positive integer.
Let $G = (V,E,w^{*})$ defined as follows.
Let $V = V' \cup \{v_{i,j} \mid 1 \leq i \leq s, 1 \leq j \leq t\}$
and $E = V \times V$.
Let $S = \{u_{1}, \ldots, u_{s}\}$.
For every $i,j$, $1 \leq i \leq s$, $1 \leq j \leq t$,
$w^{*}_{v_{i,j},u_{i}} = 1$ and $w^{*}_{v_{i,j},u} = 0$ for every $u
\in V \setminus \{u_{i}\}$.
Furthermore, for every $e \in E'$, set $w^*(e) = w(e) =
\frac{1}{2}$, and for every $u,u' \in V'$ such that $\{u,u'\} \notin
E'$, then set $w^*_{e} = 0$.
Finally, we set $A = \frac{s.t+k}{s.t+2k}$ and $B =
\frac{\frac{1}{2}(|E'|-2k)}{s.t+2k}$.

\begin{lemma}
\label{lem:beta}
If $|E^*| < s.t+2k$, then any solution for \PROBLEM~is not
admissible because $\beta(w^{*},E^*) > B$.
\end{lemma}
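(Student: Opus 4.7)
The plan is to upper bound the total weight sitting inside $E^*$, then translate that into a lower bound on $\beta$ by using the total weight $W = \sum_{e \in E} w^*(e)$. First I would compute $W$: in the constructed instance, exactly $s \cdot t$ edges carry weight $1$ (the edges $\{v_{i,j}, u_{i}\}$), exactly $|E'|$ edges carry weight $\frac{1}{2}$ (the edges of $E'$), and every other edge carries weight $0$. Hence
\begin{equation*}
W = s \cdot t + \tfrac{|E'|}{2}.
\end{equation*}

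Next, I would bound $\sum_{e \in E^*} w^*(e)$ for any $E^*$ with $m = |E^*| < s \cdot t + 2k$. Since only $s \cdot t$ edges have weight $1$ and every other edge has weight at most $\frac{1}{2}$, the sum is maximised by first picking all weight-$1$ edges and then picking weight-$\frac{1}{2}$ edges. The assumption $|E'| \geq 2k$ in the reduction guarantees that enough weight-$\frac{1}{2}$ edges are available whenever $m > s \cdot t$. A short case split on whether $m \leq s \cdot t$ or $s \cdot t < m < s \cdot t + 2k$ then gives the uniform bound
\begin{equation*}
\sum_{e \in E^*} w^*(e) \ \leq\ \frac{s \cdot t + m}{2}.
\end{equation*}

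Subtracting from $W$, I obtain $\sum_{e \in E \setminus E^*} w^*(e) \geq \frac{s \cdot t + |E'| - m}{2}$, so
\begin{equation*}
\beta(w^*, E^*) \ \geq\ \frac{s \cdot t + |E'| - m}{2m} \ =\ \frac{s \cdot t + |E'|}{2m} - \frac{1}{2},
\end{equation*}
which is strictly decreasing in $m$. Evaluating at the boundary $m = s \cdot t + 2k$ yields exactly $\frac{|E'| - 2k}{2(s \cdot t + 2k)} = B$, so the strict inequality $m < s \cdot t + 2k$ forces $\beta(w^*, E^*) > B$, violating admissibility. I do not expect any real obstacle: the connectivity constraint on $E^*$ plays no role in this lemma, and the only subtle point is being careful in the case $m \leq s \cdot t$, where the crude bound "every edge weighs at most $1$" is too weak and one must instead exploit that there are only $s \cdot t$ edges of weight $1$.
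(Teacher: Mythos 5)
Your proof is correct and follows essentially the same route as the paper's: identify the extremal configuration for $E^*$ (all $s\cdot t$ weight-$1$ edges first, then weight-$\tfrac12$ edges), lower-bound the complementary weight, and compare the resulting ratio to $B$ via monotonicity in $|E^*|$. Your packaging of the two cases into the single bound $\sum_{e\in E^*}w^*(e)\le \tfrac{s\cdot t+m}{2}$ is a slightly cleaner presentation of the paper's case split, but the argument is the same.
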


\begin{proof}
Suppose that $|E^*| < s.t+2k$.
In order to minimize $\sum_{e \in E \setminus E^*} w^*(e)$,
$E^*$ must contain $\{\{v_{i,j},u_{i}\} \mid 1 \leq i \leq s, 1 \leq
j \leq t\}$ if $|E^*| \geq s.t$.
(Otherwise we select a subset of this set of edges.)
Indeed, by construction of $G$, we have $w^{*}_{v_{i,j},u_{i}} = 1$
for every $i,j$, $1 \leq i \leq s$, $1 \leq j \leq t$.
Then, if $|E^*| - s.t > 0$, $E^*$ must contain $|E^*| - s.t$
edges of $E'$, that is edges of $E$ of weight $\frac{1}{2}$ each.
Recall that there are exactly $s.t$ edges of weight $1$, and the other
edges have weight $0$ or $\frac{1}{2}$.

There are two cases.
First, suppose that $|E^*| \geq s.t$.
We get that $\sum_{e \in E \setminus E^*} w^*(e) =
\frac{1}{2}(|E'| - (|E^*| - s.t))$.
Since $|E^*| - s.t < 2k$, then we get that $\sum_{e \in E \setminus E^*} w^*(e) =
\frac{1}{2}(|E'| - (|E^*| - s.t)) > \frac{1}{2}(|E'| - 2k)$.
Furthermore, since $|E^*| < s.t+2k$, we get that 
$\frac{\frac{1}{2}(|E'| - (|E^*| - s.t))}{|E^*|} >
\frac{\frac{1}{2}(|E'|-2k)}{s.t+2k}$.
Thus, we proved that $\beta(w^{*},E^*) = \frac{\sum_{e \in E \setminus E^*}
  w^*(e)}{|E^*|} > B$.

Second, suppose that $|E^*| < s.t$.
We get that $\sum_{e \in E \setminus E^*} w^*(e) = s.t - |E^*|
+ \frac{|E'|}{2}$.
Since $s.t - |E^*| + \frac{|E'|}{2} > \frac{1}{2}(|E'| - (|E^*| -
s.t))$, we obtain the result by the arguments described for the first
case.

Finally, if $|E^*| < s.t+2k$, then there is no admissible solution
for \PROBLEM.
\end{proof}

\begin{lemma}
\label{lem:alpha}
If $|E^*| > s.t+2k$, then any solution for \PROBLEM~is not
admissible because $\alpha(w^{*},E^*) < A$.
\end{lemma}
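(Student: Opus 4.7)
The plan is to mirror the structure of the previous lemma, but now upper-bounding $\sum_{e \in E^*} w^*(e)$ instead of lower-bounding $\sum_{e \in E \setminus E^*} w^*(e)$. The key observation is that in the constructed instance the edges of $G$ come in only three weight classes: exactly $s \cdot t$ edges of weight $1$ (namely $\{v_{i,j}, u_i\}$), exactly $|E'|$ edges of weight $\frac{1}{2}$ (the edges of $E'$), and all remaining edges of weight $0$. So to maximize $\sum_{e \in E^*} w^*(e)$ for a prescribed cardinality $m = |E^*|$, the best we can do is first grab all $s \cdot t$ weight-$1$ edges, then as many weight-$\frac{1}{2}$ edges as needed, and only afterwards fall back on weight-$0$ edges.

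Write $m = |E^*|$ and assume $m > s \cdot t + 2k$. First I would handle the regime $m - s \cdot t \leq |E'|$: the best possible sum is $s \cdot t + \frac{1}{2}(m - s \cdot t) = \frac{m + s \cdot t}{2}$, giving $\alpha(w^*,E^*) \leq \frac{m + s \cdot t}{2m}$. The goal reduces to the inequality
\begin{equation*}
\frac{m + s \cdot t}{2m} < \frac{s \cdot t + k}{s \cdot t + 2k},
\end{equation*}
which, after cross-multiplying and cancelling, simplifies exactly to $m > s \cdot t + 2k$. So this regime is handled by a single short algebraic manipulation, analogous to the one in Lemma~\ref{lem:beta}.

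For the remaining regime $m - s \cdot t > |E'|$, the sum is capped at $s \cdot t + \frac{|E'|}{2}$ because beyond that point we can only include weight-$0$ edges, which strictly decrease $\alpha$. Since by assumption $|E'| \geq 2k$, setting $m' = s \cdot t + |E'| \geq s \cdot t + 2k$ gives $\alpha \leq \frac{s \cdot t + |E'|/2}{m}$, and because $m \geq m'$ and $m > s \cdot t + 2k$, one checks that this quantity is still strictly less than $A = \frac{s \cdot t + k}{s \cdot t + 2k}$ by the same cross-multiplication trick (or by monotonicity: adding weight-$0$ edges keeps the numerator fixed while enlarging the denominator).

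The only subtle point, and the closest thing to an obstacle, is making sure we correctly identify which edges an optimal $E^*$ of a given size contains. Once that greedy characterization is pinned down, the rest is a one-line algebraic computation that matches the threshold $s \cdot t + 2k$ baked into $A$. Combining both regimes yields $\alpha(w^*, E^*) < A$ whenever $|E^*| > s \cdot t + 2k$, which is the desired conclusion.
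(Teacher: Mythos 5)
Your proof is correct and follows essentially the same approach as the paper's: upper-bound $\sum_{e \in E^*} w^*(e)$ by the greedy choice (all $s \cdot t$ weight-$1$ edges, then weight-$\frac{1}{2}$ edges of $E'$) and reduce the comparison with $A = \frac{s\cdot t + k}{s\cdot t + 2k}$ to the hypothesis $|E^*| > s\cdot t + 2k$ via cross-multiplication. You are in fact slightly more careful than the paper, which implicitly assumes $|E^*| - s\cdot t \leq |E'|$ and omits your second regime where only weight-$0$ edges remain.
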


\begin{proof}
  Suppose that $|E^*| > s.t+2k$.
  In order to maximize $\sum_{e \in E^*} w^*(e)$,
$E^*$ must contain $\{\{v_{i,j},u_{i}\} \mid 1 \leq i \leq s, 1 \leq
  j \leq t\}$ and $|E^*| - s.t$ edges of $E'$.
Indeed, by construction of $G$, we have $w^{*}_{v_{i,j},u_{i}} = 1$
for every $i,j$, $1 \leq i \leq s$, $1 \leq j \leq t$.
Furthermore, $w^*(e) = \frac{1}{2}$ for every $e \in E'$.
Recall that there are exactly $s.t$ edges of weight $1$, and the other
edges have weight $0$ or $\frac{1}{2}$.
We get that $\alpha(w^{*},E^*) = \frac{s.t + \frac{1}{2}(|E^*| -
  s.t)}{|E^*|} < \frac{s.t+k}{s.t+2k} = A$.
Indeed, the average weight is lower when there are more edges of
weight $\frac{1}{2}$ (the number of edges of weight $1$ is the same in
both ratios).

Finally, if $|E^*| > s.t+2k$, then there is no admissible solution
for \PROBLEM.
\end{proof}

By Lemma~\ref{lem:beta} and Lemma~\ref{lem:alpha}, we get the
following corollary.

\begin{corollary}
\label{cor:alpha-beta}
Any solution for \PROBLEM~is such that $|E^*| = s.t + 2k$.
\end{corollary}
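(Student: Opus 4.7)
The plan is to observe that this corollary is an immediate consequence of the two preceding lemmas combined with the definition of admissibility for \PROBLEM. An admissible solution $E^*$ must, by definition, simultaneously satisfy $\alpha(w^{*},E^*) \geq A$ and $\beta(w^{*},E^*) \leq B$. The two lemmas each rule out one direction of deviation from the target cardinality $s.t + 2k$, so the remaining case is forced.

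Concretely, I would structure the proof as a short case analysis on $|E^*|$. First I would suppose $|E^*| < s.t + 2k$; then Lemma~\ref{lem:beta} gives $\beta(w^{*},E^*) > B$, contradicting admissibility. Next I would suppose $|E^*| > s.t + 2k$; then Lemma~\ref{lem:alpha} gives $\alpha(w^{*},E^*) < A$, again contradicting admissibility. The only remaining possibility is $|E^*| = s.t + 2k$, which establishes the corollary.

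There is essentially no obstacle here, since all the quantitative work has been carried out in the proofs of Lemma~\ref{lem:beta} and Lemma~\ref{lem:alpha}; the corollary is purely combinatorial glue. The only point worth being explicit about is that the lemmas were stated as implications on admissibility rather than as direct inequalities on $\alpha$ or $\beta$ alone, so the proof merely needs to invoke them and note that neither strict inequality can hold for a valid solution, leaving exactly one value of $|E^*|$ possible.
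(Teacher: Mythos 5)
Your proof is correct and matches the paper exactly: the paper derives the corollary directly from Lemma~\ref{lem:beta} and Lemma~\ref{lem:alpha} by the same two-case exclusion argument, leaving $|E^*| = s.t + 2k$ as the only possibility.
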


We prove in Lemma~\ref{lem:droite-gauche} and in
Lemma~\ref{lem:gauche-droite} that there is an admissible solution for
\PROBLEM~if and only if there is an admissible solution for \STEINER.

\begin{lemma}
\label{lem:droite-gauche}
If there is an admissible solution for \STEINER, then there is an
admissible solution for \PROBLEM.
\end{lemma}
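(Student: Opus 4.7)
The plan is to build an admissible $E^{*}$ for \PROBLEM~directly from an admissible subgraph $H$ for \STEINER. Concretely, I would take $E^{*}$ to consist of (i) the $s\cdot t$ edges $\{v_{i,j}, u_{i}\}$ of weight $1$, (ii) all edges of $E(H)$, each of weight $\frac{1}{2}$, and (iii) when $|E(H)| < 2k$, an additional batch of $r = 2k - |E(H)|$ edges from $E' \setminus E(H)$, picked greedily so that $|E^{*}| = s\cdot t + 2k$. This target size is exactly the one dictated by Corollary~\ref{cor:alpha-beta}.

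Connectivity is almost free from pieces (i) and (ii): each $v_{i,j}$ hangs off $u_{i} \in S$, while $H$ is connected and contains $S$, so $\{v_{i,j}\}_{i,j} \cup V(H)$ sits in a single connected component. The delicate step is the greedy extension in (iii): I would add the extra edges one at a time, each time selecting an edge of $E' \setminus E(H_{i})$ with at least one endpoint in the currently accumulated vertex set, so that the addition preserves connectivity and adds at most one new vertex to $V^{*}$.

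The main obstacle is showing that this greedy choice is always available, which is where the assumption $|E'| \geq 2k$ and the connectedness of $G'$ come in. By induction on the number of additions, at step $i < r$ one has $|E(H_{i})| < 2k \leq |E'|$, so $E' \setminus E(H_{i})$ is non-empty; and connectedness of $G'$ guarantees either a boundary edge (when $V(H_{i}) \subsetneq V'$) or an interior one (when $V(H_{i}) = V'$). Once this is settled, the verification that $\alpha(w^{*}, E^{*}) = \frac{s\cdot t + k}{s\cdot t + 2k} = A$ and $\beta(w^{*}, E^{*}) = \frac{\frac{1}{2}(|E'| - 2k)}{s\cdot t + 2k} = B$ follows from a direct weight count, so both constraints of \PROBLEM~are met (in fact, with equality).
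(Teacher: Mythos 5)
Your proposal is correct and follows essentially the same route as the paper's proof: augment $E(H)$ with the $s\cdot t$ pendant edges of weight $1$ and, if $|E(H)|<2k$, pad with weight-$\frac{1}{2}$ edges of $E'$ chosen to preserve connectivity (which the connectedness of $G'$ and $|E'|\geq 2k$ make possible), then verify $\alpha=A$ and $\beta=B$ by direct counting. Your greedy, step-by-step justification of why the padding set exists is a slightly more explicit rendering of the paper's remark that such an $F$ can always be found.
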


\begin{proof}
Suppose there is an admissible solution for \STEINER.
We prove that there is an admissible solution for \PROBLEM.
Let $H$ be a connected subgraph such that $S
\subseteq V(H)$ and $\sum_{e \in E(H)}{w(e)} = \frac{1}{2}|E(|H)|
\leq k$.
If $|E(H)| = 2k$, then set $E^* = E(H) \cup \{\{v_{i,j},u_{i}\} \mid 1 \leq i \leq s, 1 \leq j \leq
t\}$.
If $|E(H)| < 2k$, then set $E^* = E(H) \cup \{\{v_{i,j},u_{i}\} \mid 1 \leq i \leq s, 1 \leq j \leq
t\} \cup F$, where $F \subseteq E'$ such that $F \neq E(H) =
\emptyset$, $w^*(e) = \frac{1}{2}$ for every $e \in F$, and such
that the graph induced by $E(H) \cup F$ is connected.
The last condition comes from Corollary~\ref{cor:alpha-beta} in order
to get the right number of edges in $E^*$.
This condition is always possible to satisfy because $G'$ is
connected.

The graph induced by $E^*$ is connected.
Indeed, $H$ is an admissible solution for \STEINER, $E(H) \cup F$ is
connected by construction, and $\{\{v_{i,j},u_{i}\} \mid 1 \leq j \leq
t\}$ is a set of edges all adjacent to $u_{i} \in S$ for every $i$, $1
\leq i \leq s$.

Furthermore, we get
$$\alpha(w^{*},E^*) = \frac{\sum_{e \in E^*}
  w^*(e)}{|E^*|} = \frac{s.t+k}{s.t+2k} = A$$
and
$$\beta(w^{*},E^*) = \frac{\sum_{e \in E \setminus E^*}
  w^*(e)}{|E^*|} = \frac{\frac{1}{2}(|E'|-2k)}{s.t+2k} = B.$$

Finally, we proved that there is an admissible solution for \PROBLEM.
\end{proof}

\begin{lemma}
\label{lem:gauche-droite}
If there is an admissible solution for \PROBLEM, then there is an
admissible solution for \STEINER.
\end{lemma}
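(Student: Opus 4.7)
The plan is to reverse the analysis used in the previous lemmas and extract a Steiner tree from the forced structure of any admissible CSCN solution. Let $E^*$ be admissible. By Corollary~\ref{cor:alpha-beta} we already know $|E^*| = s \cdot t + 2k$, so the condition $\alpha(w^*,E^*) \geq A$ is equivalent to $\sum_{e \in E^*} w^*(e) \geq s \cdot t + k$.

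Next, I would exploit the very restricted weight spectrum of the constructed instance. By construction there are exactly $s \cdot t$ edges of weight $1$ (the pendants $\{v_{i,j},u_i\}$), the edges of $E'$ all have weight $\frac{1}{2}$, and every other edge of $E$ has weight $0$. Thus the maximum total weight that any family of $s \cdot t + 2k$ edges can carry is $s \cdot t + k$, and this maximum is achieved only by taking all $s \cdot t$ pendants together with exactly $2k$ edges from $E'$. Since admissibility forces the sum to be at least $s \cdot t + k$, equality must hold and the composition of $E^*$ is determined: $E^*$ contains every pendant $\{v_{i,j},u_i\}$, together with a set $F \subseteq E'$ of cardinality $2k$, and contains no weight-$0$ edge. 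This composition step is the one requiring the most care, because one must verify that replacing a weight-$1$ edge by weight-$0$ and weight-$\frac{1}{2}$ edges, or similar reshufflings, strictly drops the total below $s \cdot t + k$.

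From $F$ I would build the candidate Steiner subgraph $H$ with $E(H) = F$ and $V(H) = V(F)$. Because each $v_{i,j}$ is incident in $E^*$ to a single edge (the pendant to $u_i$), the vertices $v_{i,j}$ are leaves of the subgraph induced by $E^*$; hence the connectedness of $E^*$ is equivalent to the connectedness of the subgraph spanned on $S \cup V(F)$ by $F$ alone. In particular every $u_i \in S$ must be an endpoint of some edge of $F$, giving $S \subseteq V(H)$, and $H$ itself must be connected. Finally, $\sum_{e \in E(H)} w(e) = \tfrac{1}{2}|F| = k$, so $H$ is an admissible Steiner solution.

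The main obstacle is the rigidity argument of the second paragraph: once one establishes that $E^*$ must coincide exactly with all pendants plus $2k$ edges of $E'$, the Steiner certificate essentially falls out of the leaf structure of the pendants. I expect the connectedness reduction in the third paragraph to be straightforward, with the only care needed being the observation that no $u_i$ can be missed by $F$ (otherwise the star around $u_i$ would be a separate component of $E^*$), thereby guaranteeing $S \subseteq V(H)$.
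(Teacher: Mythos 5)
Your proposal is correct and follows essentially the same route as the paper's proof: fix $|E^*| = s\cdot t + 2k$ via Corollary~\ref{cor:alpha-beta}, use the rigid weight spectrum ($s\cdot t$ edges of weight $1$, the edges of $E'$ of weight $\tfrac{1}{2}$, the rest of weight $0$) to force $E^*$ to consist of all pendants plus exactly $2k$ edges of $E'$, and then read off $E^* \cap E'$ as the Steiner certificate using the fact that the pendants are leaves. Your explicit remark that each $u_i$ must be hit by $F$ (lest its star form a separate component) is the same connectivity observation the paper makes, just stated slightly more carefully.
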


\begin{proof}
Suppose there is an admissible solution for \PROBLEM.
We prove that there is an admissible solution for \STEINER.
Let $E^* \subseteq E$ be such that the graph induced by $E^*$ is
connected, and such that $\alpha(w^{*},E^*) \geq
\frac{s.t+k}{s.t+2k} = A$ and $\beta(w^{*},E^*) \leq =
\frac{\frac{1}{2}(|E'|-2k)}{s.t+2k} = B$.

We first prove that $\{\{v_{i,j},u_{i}\} \mid 1 \leq i \leq s, 1 \leq j \leq
t\} \subseteq E^*$.
By Corollary~\ref{cor:alpha-beta}, we know that $|E^*| = s.t + 2k$.
Thus, it necessarily means that $\sum_{e \in E^*} w^*(e) \geq s.t
+ k$.
By construction of $G$, the set of edges of weight $1$ is $\{\{v_{i,j},u_{i}\} \mid 1 \leq i \leq s, 1 \leq j \leq
t\}$, that is $\{e \in E \mid w^*(e) > \frac{1}{2}\} =
\{\{v_{i,j},u_{i}\} \mid 1 \leq i \leq s, 1 \leq j \leq t\}$.
We get that $\{\{v_{i,j},u_{i}\} \mid 1 \leq i \leq s, 1 \leq j \leq
t\} \subseteq E^*$ because, otherwise, we would have $\sum_{e \in E^*} w^*(e) < s.t
+ k$.

Furthermore, every $e \in E^* \cap E'$ is such that
$w^*(e) = \frac{1}{2}$.
Indeed, otherwise, we would have $\sum_{e \in E^*} w^*(e) < s.t
+ k$.

Finally, since $E^*$ is an admissible solution for \PROBLEM,
then it means that the graph induced by the set of edges $E^* \cap
E'$ is connected and is such that for every $u_{i}$, $1 \leq i \leq
s$, then there is an edge $e \in E^* \cap E'$ that is adjacent to
$u_{i}$.
By the previous remark, every edge in $E^* \cap E'$ has weight
$\frac{1}{2}$.
Thus, it means that there is $E^* \cap E'$ is an admissible solution
for \STEINER~considering the graph $G'$.
Indeed $|E^* \cap E'| = 2k$ and so $\sum_{e \in E^* \cap E'}w(e)
= k$.
\end{proof}

We are now able to prove the NP-completeness of \PROBLEM.

\begin{theorem}
\label{the:NP-complete-with-Steiner}
\PROBLEM~is NP-complete.
\end{theorem}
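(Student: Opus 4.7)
The plan is to package the preceding lemmas into the two standard obligations of NP-completeness: membership in NP, and NP-hardness via the reduction from \STEINER~constructed above.

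For NP membership, a candidate edge set $E^* \subseteq E$ serves as a polynomial-sized certificate. I would verify connectedness of the subgraph induced by $E^*$ in $O(|V| + |E^*|)$ time via a single BFS or DFS, then compute $\alpha(w^{*}, E^*)$ and $\beta(w^{*}, E^*)$ by summing $w^*$ over $E^*$ and $E \setminus E^*$, dividing by $|E^*|$, and comparing the results with $A$ and $B$. All of this runs in polynomial time in the input size, so \PROBLEM~is in NP.

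For NP-hardness, I would invoke the reduction already constructed. First I would check that it is genuinely polynomial: fixing any constant $t \geq 1$ (say $t = 1$) gives $|V| = |V'| + s \leq 2|V'|$ and $|E| = |V|^{2}$, and the rationals $A = \frac{s.t+k}{s.t+2k}$ and $B = \frac{\frac{1}{2}(|E'|-2k)}{s.t+2k}$ clearly have polynomial bit-length in the size of the \STEINER~instance. Then Lemma~\ref{lem:droite-gauche} and Lemma~\ref{lem:gauche-droite} together yield the desired equivalence: the constructed \PROBLEM~instance admits an admissible solution if and only if the starting \STEINER~instance does. Since \STEINER~is NP-complete even when all edge weights equal $\frac{1}{2}$~\cite{Garey1979}, this is a valid polynomial-time many-one reduction from \STEINER~to \PROBLEM, giving NP-hardness.

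The substantive difficulty is already absorbed in the two directional lemmas and in Corollary~\ref{cor:alpha-beta}, which pins $|E^*|$ to the right value $s.t + 2k$; those were the steps that required real argument. At the level of the theorem itself, the only remaining subtlety is confirming that the reduction produces an instance of polynomial size, which is immediate from the explicit formulas above. Combining NP membership with the reduction then completes the argument.
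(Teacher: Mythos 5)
Your proposal is correct and follows essentially the same route as the paper: a polynomial-time reduction from \STEINER~whose correctness is exactly the content of Lemma~\ref{lem:droite-gauche} and Lemma~\ref{lem:gauche-droite}. If anything, your write-up is slightly more complete than the paper's, since you explicitly supply the NP-membership certificate check and spell out why the reduction has polynomial size, both of which the paper dispatches with ``the reduction is clearly polynomial.''
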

\begin{proof}
The reduction is clearly polynomial.
Furthermore, Lemma~\ref{lem:droite-gauche} and
Lemma~\ref{lem:gauche-droite} prove the equivalence between
\PROBLEM~and \STEINER.
Since the decision version of \STEINER~is NP-complete even if all weights are
equal~\cite{Garey1979}, then we obtain the NP-completeness of the
decision version of \PROBLEM.
\end{proof}



In Theorem~\ref{the:NP-complete-with-Steiner} we have proved that \PROBLEM~is \NP-complete. Hence, to be able to solve it in reasonable time we need a relaxation to make it tractable or an approximate algorithm for the complete version. In this article we will propose both.

\subsection{Relaxation of the CSCN problem}\label{algorithm_section}
We proved in the previous section that the connectivity constraint is the main reason of the difficulty of the problem. Without it, it becomes tractable. So we solve, in this section, a relaxed version of problem without the connectivity constraint. Then, we use this solution to approximate the full problem.

\begin{theorem} The decision version of \PROBLEM~without the connectivity constraint is in \P.
\end{theorem}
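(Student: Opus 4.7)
The plan is to show that once the connectivity constraint is dropped, both $\alpha(w^*,E^*) \geq A$ and $\beta(w^*,E^*) \leq B$ depend on $E^*$ only through its cardinality and through the sum $W^*(E^*) := \sum_{e \in E^*} w^*(e)$. Writing $W := \sum_{e \in E} w^*(e)$ and $k := |E^*|$, the two conditions become
\begin{equation*}
W^*(E^*) \geq A \cdot k \quad \text{and} \quad W - W^*(E^*) \leq B \cdot k,
\end{equation*}
that is, $W^*(E^*) \geq \max\bigl(A \cdot k,\ W - B \cdot k\bigr)$. Since the right-hand side depends only on $k$, for a fixed cardinality $k$ the problem reduces to checking whether the maximum of $W^*(E^*)$ over all $E^* \subseteq E$ with $|E^*|=k$ meets this threshold.

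Next I would observe that this inner maximization is trivially solved greedily: it is optimal to take the $k$ edges of largest $w^*$-value. This gives the algorithm. First, sort the edges $e_1, e_2, \ldots, e_{|E|}$ so that $w^*(e_1) \geq w^*(e_2) \geq \cdots \geq w^*(e_{|E|})$, and precompute the prefix sums $S_k := \sum_{i=1}^{k} w^*(e_i)$ together with $W = S_{|E|}$. Then, for each $k \in \{1, 2, \ldots, |E|\}$, check whether $S_k \geq A \cdot k$ and $W - S_k \leq B \cdot k$; answer \emph{yes} if and only if some $k$ satisfies both. Correctness follows from the two observations above, and the running time is $O(|E| \log |E|)$ for sorting plus $O(|E|)$ for the scan, so the relaxation lies in $\P$.

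There is essentially no hard step here; the only subtlety is making the reduction to a size-parameterized problem explicit and arguing that the greedy choice is optimal for each fixed $k$. The latter is immediate from an exchange argument: replacing any edge in $E^*$ by an edge of larger $w^*$ not in $E^*$ can only increase $W^*(E^*)$ while preserving $|E^*|$. Thus the main content of the proof is the clean decoupling of the two densities $\alpha$ and $\beta$ via the identity $\beta(w^*,E^*) = \bigl(W - W^*(E^*)\bigr)/|E^*|$, which makes the greedy strategy work for both constraints simultaneously.
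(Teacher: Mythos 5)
Your proposal is correct and follows essentially the same route as the paper: sort the edges by $w^*$ non-increasingly, consider the prefix of each size $k$, and justify that the top-$k$ prefix is the best candidate for each cardinality via an exchange argument. Your explicit decoupling of $\beta$ through $W - W^*(E^*)$ and the prefix-sum formulation is a slightly cleaner writeup of the same algorithm and the same correctness argument (the paper mentions the prefix-sum trick only as an optional optimization), so there is nothing substantive to add.
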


\begin{algorithm}
\caption{Maximum edges}
\label{alg_max_edges}
\begin{algorithmic}
\State Compute $w^*(e)$ for each $e \in E$
\State \Call{Sort}{$E$} \Comment{sorts edges by $w^*$ non-increasingly}
\ForEach {$e \in E$}
	\State $E^* \gets E^* \cup {e}$ 
    \If {$\alpha(w^*, E^*) > A$ and $\beta(w^*, E^*) < B$}
		\State \Return $True$
    \EndIf
\EndFor
\State \Return $False$
\end{algorithmic}
\end{algorithm}

\begin{proof}
Algorithm \ref{alg_max_edges}, in each step $i$, defines $E^*$ as the $i$ maximum weighted edges and tries to use that to fulfill the constraints.

Assume that there exists an $E^*$ that fulfills the constraints. There are two cases: 1) $E^*$ has the $|E^*|$ maximum weighted edges, 2) there are $e_j \in E^*$, $e_k \in E \setminus E^*$ such that $w^*(e_k) \geq w^*(e_j)$.
In 1), Algorithm \ref{alg_max_edges} will find $E^*$.
In 2), let $E' = (E^* \cup \{e_k\}) \setminus \{e_j\}$ another subset of $E$. Then
$$\alpha(w^*, E') = \frac{\sum_{e \in E'} w^*(e)}{|E'|} = \frac{\sum_{e \in E'} w^*(e)}{|E^*|} \geq \frac{\sum_{e \in E^*} w^*(e)}{|E^*|} = \alpha(w^*, E^*) \geq A$$
because the edges in $E^*$ are the same as the ones in $E'$ except from one that has a larger weight. For the same reason,
$$\beta(w^*, E') = \frac{\sum_{e \in E \setminus E'} w^*(e)}{|E'|} = \frac{\sum_{e \in E \setminus E'} w^*(e)}{|E^*|} \leq \frac{\sum_{e \in E \setminus E^*} w^*(e)}{|E^*|} = \beta(w^*, E^*) \leq B.$$
Thus, we found a new subset of $E$ that stills fulfills the constraints. We can do the same process with $E'$ (replace an edge with another one of larger weight) iteratively, always getting subsets that fulfills the constraints, until we cannot do this anymore. At that point we will have a subset that has only the maximum $|E^*|$ edges and fulfills the constraints. Thus, algorithm \ref{alg_max_edges} will find this subset.

We now need to prove algorithm \ref{alg_max_edges} runs in polynomial time in the size of $|E|$. The first operation, computing $w^*(e)$ for each $e$, implies computing the mean and standard deviation for each edge across the population, which can be done in $\mathcal{O}(N)$ per edge (where $N$ is the size of the population). This is $\mathcal{O}(N*|E|)$ for all the edges. The second step, sorting, can be done in $\mathcal{O}(|E|\log{|E|})$.

The main loop runs at most $|E|$ times, and in each loop it adds an edge to $E^*$, computes $\alpha$ and $\beta$ and performs two comparisons. The comparisons can be done in constant time, as the addition to $E^*$ if we use a linked list of edges to represent it. To compute $\alpha$ and $\beta$ it is needed to iterate once again $E$ (the part in $E^*$ for $\alpha$, the part in $E \setminus E^*$ for $\beta$) adding the weights together and then performing two divisions. This can be done in linear time in the size of $E$, and even quicker (constant time) if we optimize it by keeping the values of $\alpha$ and $\beta$ across loops and updating them with the weight of the edge that changed sets. 

Then, algorithm \ref{alg_max_edges} solves the \PROBLEM~in $\mathcal{O}(max(|E|^2, |E|*N)$ or in $\mathcal{O}(|E|*N)$ if a little optimization is used.

\end{proof}

\subsection{Heuristic approach}\label{heuristic}
In Section~\ref{algorithm_section} we developed  Algorithm~\ref{alg_max_edges} to solve the problem of finding the Core Structural Connectivity Network in polynomial time. However, this algorithm does not guarantee a connected result. We solve the original problem, presented in Section~\ref{problem_section}, by first applying Algorithm~\ref{alg_max_edges} and then modifying the resulting core graph $G^*$ to guarantee its connectedness. This results in an approximate solution for the full problem computable in polynomial time.

To extend $G^*$ into a connected graph we add the necessary edges while decreasing the minimum possible the objective function $f$ defined in Eq.~\ref{optimization}. For this, we use the same approach that Wassermann et al.~\cite{Wassermann2016}. Namely, we make a multigraph $G_{cc}$ with the connected components of $G^*$ as nodes, complete it with all the possible edges between those connected components, and run a Maximum Spanning Tree algorithm. This selects the edges needed to produce a connected subgraph with the maximum possible weight. For the full details, see Wassermann et al.~\cite{Wassermann2016}. This way we get a connected subgraph close to the best possible subgraph, which we obtained using Algorithm~\ref{alg_max_edges}.

\section{Experiments and Results}
\begin{figure}\center%
  \includegraphics[height=.15\textheight]{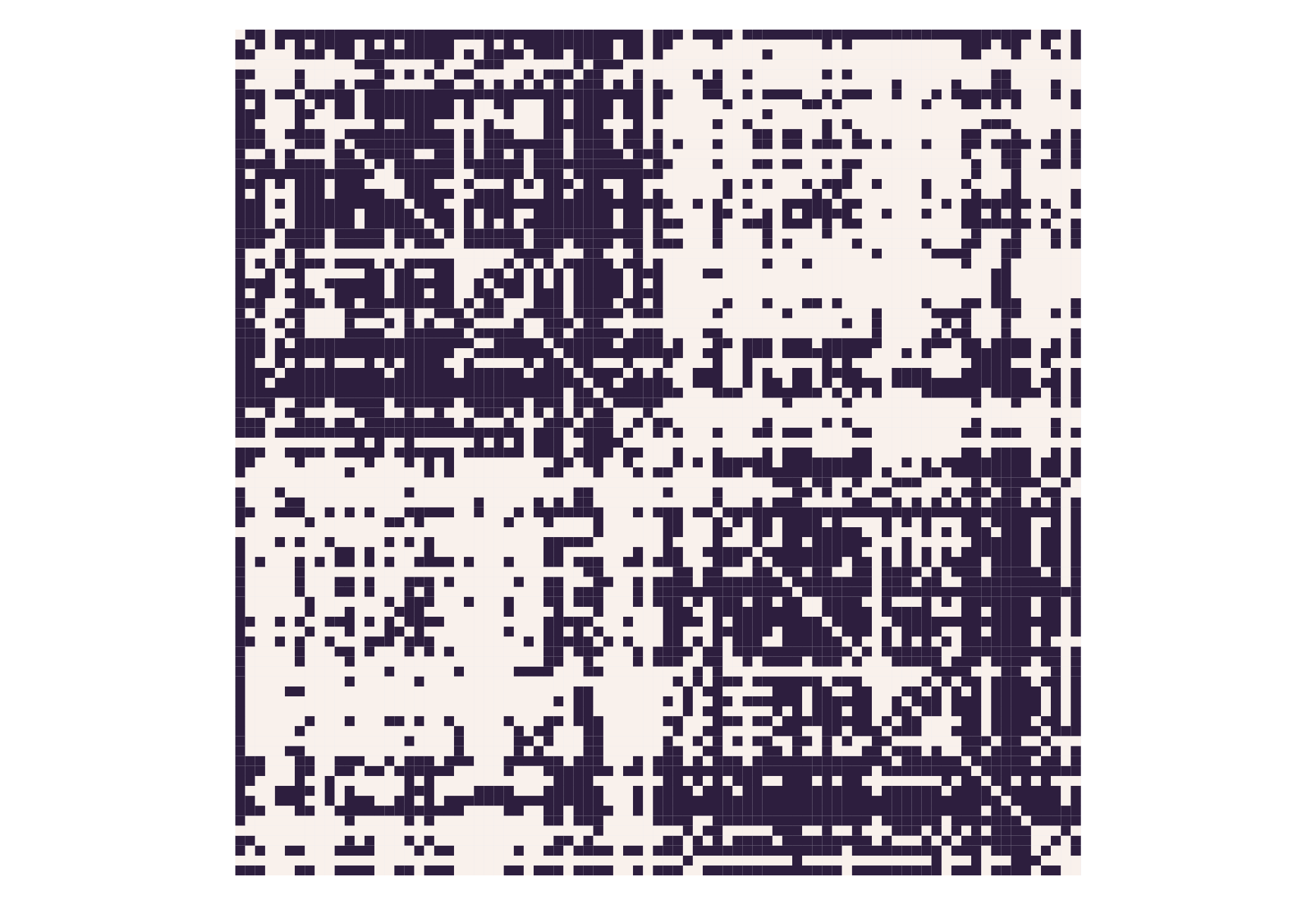}
  \includegraphics[height=.15\textheight]{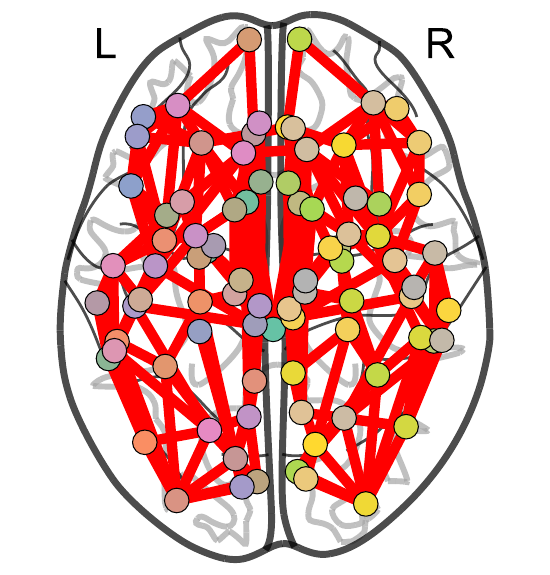}
  \includegraphics[height=.15\textheight]{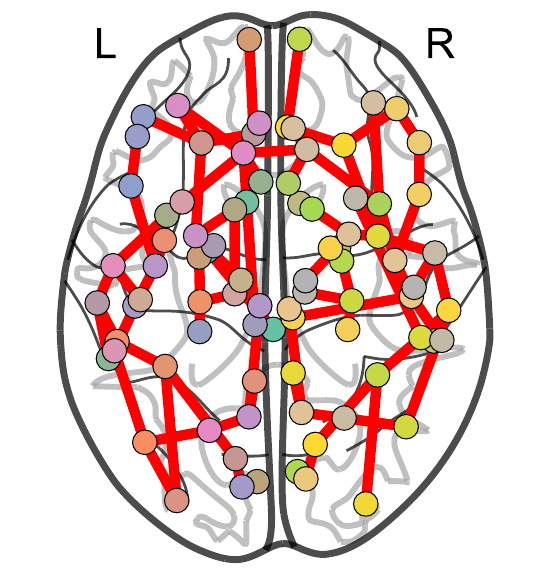}
  \caption{Core structural connectivity network computed by our approach. On the left, we show the adjacency matrix for $\lambda = 0.5$, where 48.19\% of the connections were included in the CSCN. In the central and right panels, we show the resulting CSCN for $\lambda=0.9$ and $0.99$ respectively. The percentage of included connections in the CSCN is 5.99\% and 1.27\% respectively.\label{fig:connectome}}
  
\end{figure}

We formalized the CSCN problem in section \ref{problem_section} and designed an algorithm to solve it in section \ref{heuristic}. Now we will asses the performance of our method. For this, we compare it with the most used~\cite{Gong2009} and with the recent one~\cite{Wassermann2016} in the task of connectivity prediction performance.

We use a subset of the HCP500 dataset~\cite{Sotiropoulos2013}: all subjects aged 21-40 with complete dMRI protocol, totaling 309. We compute the weighted connectivity matrices between the cortical regions defined by the Desikan atlas~\cite{Desikan2006} as done by Sotiropoulos et al.~\cite{Sotiropoulos2013}.
Examples of CSCN exctracted with our algorithm at different $\lambda$ levels are shown in Fig.~\ref{fig:connectome}, which was generated using Nilearn~\cite{Abraham2014}.

\subsection{Consistency of the Extracted Graph}
To compare the stability across different algorithms for CSCN, we use an analysis based on Wassermann et al.~\cite{Wassermann2016}: we randomly take $500$ subsets of $100$ subjects each and computed the core graphs for all subsets. We then compute the number of \emph{unstable connections}: connections that present in at least one core graph but not in all of them. Finally, in Table~\ref{table:number_of_features} we report each algorithm's \emph{stability}: 

$$\text{stability of the algorithm} \triangleq 1 - \frac{\#\{\text{unstable connections}\}}{\#\{\text{total connections}\}}~.$$

This measure quantifies the CSCN consistency across subsamples. Due to the homogeneity of our sample, we expect the CSCNs obtained by an algorithm to be similar across subsamples. Hence, a stabler algorithm is preferable.


\subsection{Predicting Handedness-Specific Connectivity}

We evaluate performance of the methods by using the generated core graphs as a feature selection for handedness specific connectivity. We use a nested Leave-$\tfrac 1 3$-Out procedure: the outer loop performs model selection on $\tfrac 1 3$ of the subjects using the core graph algorithm and the inner loop performs model fitting and prediction using the selected features.

\begin{table}
\centering

\caption{Stability of the algorithms and amount of features selected by linear regression in the core graph relating the weights with handedness. Our procedure gets more features selected than Gong et al.~\cite{Gong2009} and Wassermann et al.~\cite{Wassermann2016}, showing better statistical power. It's also more stable than Gong et al., showing improved consistency. \label{table:number_of_features}}
\begin{tabular}{ l c c c }
	\hline
    Algorithm & Features (mean) & Features (std) & Stability \\
    \hline
    Gong et al. 2009& 0.066 & 0.256 & 0.364 \\
    Wassermann et al. 2016 & 0.415 & 0.723 & 0.644 \\
    Our approach ($\lambda = 0.50$) & 1.042 & 1.269 & 0.528 \\
\end{tabular}
\end{table}

\begin{figure}\center%
  \includegraphics[height=.3\textheight]{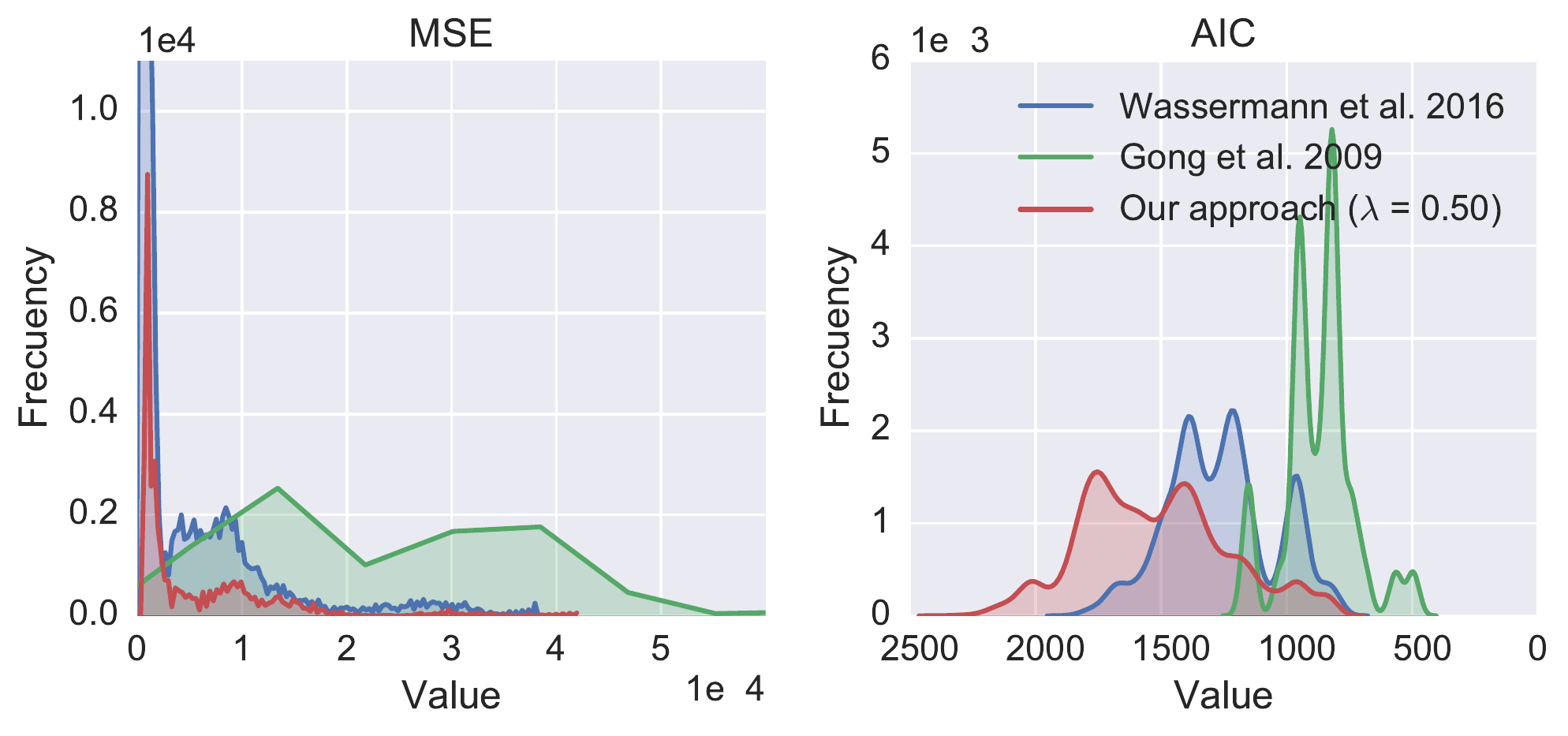}
  \caption{Performance of core network as feature selection for a linear model for handedness specific connectivity. We evaluate model prediction (left) and fit (right) for Gong et al.~\cite{Gong2009} in green, Wassermann et al.~\cite{Wassermann2016} in blue and ours, in red. We show the histograms from our nested Leave-$\tfrac 1 3$-Out experiment. In both measures, our approach has more frequent lower values than Gong et al., showing a better performance.\label{fig:prediction_performance}}
\end{figure}

Specifically, we first take $\tfrac 1 3$ subjects randomly and compute the core graph for those subjects using the three different algorithms. Then we add the weights for the selected edges for each subject, and select the features $F$ that are more determinant of handedness using a linear least-squares regression and the Bonferroni correction for multiple hypotheses. This experiment is repeated 500 times. We quantify the amount of features that are selected after each procedure, which indicates how useful is the core graph algorithm for selecting the edges related to handedness. 
We show the results in Table~\ref{table:number_of_features}.

To evaluate the prediction, we randomly take $\tfrac 1 2$ of the remaining subjects and fit a linear model on the features $F$ to predict connectivity weights using the handedness of each subject. Finally, we predict the values of the features $F$ from the handedness in the subjects left out. We quantify the quality of the linear model fitting Akaike Information Criterion (AIC) and of the prediction performance with the mean squared error (MSE) of the prediction. For both measures a lower value indicates better performance. The outer loop is performed 500 times and the inner loop 100 times per outer loop, which totals 50,000 experiments. We show the results of this experiments in Fig.~\ref{fig:prediction_performance}.

\section{Discussion and Conclusion}
We presented for the first time a polynomial algorithm to extract the core structural connectivity network of a population combining a graph-theoretical approach with statistic relevance of the connections, observing the recent evidence of the structural network topology.

Our results show that our algorithm outperforms, in the prediction experiment, the most used technique~\cite{Gong2009} as well as latest approaches~\cite{Wassermann2016}. In Table~\ref{table:number_of_features} we can see that our algorithm preserves, in average, more connections correlated with the handedness of the subjects. We can also see that despite being less stable than Wassermann et al.'s it is stabler than Gong et al.'s. Finally, Fig.~\ref{fig:prediction_performance} shows that, in the handedness prediction experiment, our method outperforms  Gong et al.'s and Wassermann et al's: the number of cases with lower AIC and MSE is larger in our case. Hence, our CSCN is better as linear model relating connectivity with handedness in terms of model fitting and prediction.

In terms of theoretical contributions, we formalized the problem, proved its difficulty and gave a novel algorithm for dealing with it. We then validated our approach by showing its power as feature selector for getting connections related to handedness with 300 real subjects' data. The experiment shows our method performs better than the currently available. Moreover, our method avoids the double dipping problem by not choosing the feature set with hypothesis testing.

{\small \textbf{Acknowledgements} Authors acknowledge funding from ERC Advanced Grant agreement No 694665 : CoBCoM - Computational Brain 
Connectivity Mapping}

\vspace{-.5cm}
\bibliographystyle{splncs03}
\bibliography{library}

\end{document}